\newtheorem{defn}{Definition}
\newtheorem{thm}{Theorem}
\newtheorem{lem}{Lemma}
\providecommand{\nn}{\nonumber}
\providecommand{\be}{\begin{equation}}
  \providecommand{\ee}{\end{equation}}
\providecommand{\bea}{\begin{eqnarray}}
  \providecommand{\eea}{\end{eqnarray}}
\providecommand{\beas}{\begin{eqnarray*}}
  \providecommand{\eeas}{\end{eqnarray*}}
\providecommand{\beni}{\begin{equation*}}
  \providecommand{\eeni}{\end{equation*}}
\providecommand{\bw}{\begin{widetext}}
  \providecommand{\ew}{\end{widetext}}
\date{}
\begin{document}

\title{Rigorous Results for Hierarchical Models of Structural Glasses}
\author{Michele Castellana\footnote{Lewis-Sigler Institute for Integrative Genomics, Princeton University, Princeton, New Jersey 08544, United States.}}
\maketitle

\begin{abstract}
We consider two non-mean-field models of structural glasses built on a hierarchical lattice. First, we consider a hierarchical version of the random energy model (HREM), and we prove the existence of the thermodynamic limit and self-averaging of the free energy. Furthermore, we prove that the infinite-volume entropy is positive in a high-temperature region bounded from below, thus providing an upper bound on the Kauzmann critical temperature. In addition, we show how to improve this bound by leveraging the hierarchical structure of the model. Finally, we introduce  a hierarchical version of the $p$-spin model of a structural glass, and we prove the existence of the thermodynamic limit and self-averaging of the free energy. 
\end{abstract}

\section{Introduction}

Understanding the low-temperature behavior of structural glasses and the nature of their glassy phase is one of the deepest unsolved problems in condensed-matter theory \cite{anderson1995through}: The existence of a Kauzmann transition, a phase transition characterized by the system being frozen in a few low-lying energy states at low temperatures \cite{kauzmann1948nature}, has been the subject of an ongoing debate for a long time now  \cite{biroli2009random}. The development of exactly solvable models mimicking the phenomenology of structural  glasses,  the random energy model (REM)  \cite{derrida1980randomlimit} and the $p$-spin model (PSM) \cite{gross1984simplest}, showed that this transition exists on a mean-field level: For the REM, the Kauzmann transition is characterized by a vanishing entropy \cite{derrida1980randomlimit}, while for the PSM it is characterized by a vanishing complexity, i.e. the logarithm of the number of metastable states \cite{castellani2005spin,biroli2009random,crisanti1995thouless}. Despite the fact that these models reproduce some features of the phenomenology of structural glasses \cite{biroli2009random}, it is still unclear whether the REM and PSM provide a reliable description of the glass transition beyond the mean-field case. After the introduction of these models, further studies suggested how the mean-field physical scenario emerging from the solution of the REM and PSM could be generalized to finite-dimensional systems: A new picture, known as the random first order transition (RFOT) theory \cite{kirkpatrick1989scaling,biroli2009random}, was proposed, suggesting that a Kauzmann transition occurs also for non-mean-field structural glasses. \\

Given that non-mean-field versions of the REM and PSM are hard to solve even with non-rigorous methods, it is natural to study the simplest solvable non-mean-field versions of these models. For ferromagnetic systems, an important role in understanding the non-mean-field scenario has been played by spin systems built on hierarchical lattices \cite{dyson1969existence}: In these models, the renormalization-group (RG) equations emerge in a simple way, and several properties of the ferromagnetic transition can be obtained rigorously  \cite{bleher1973investigation}. To study non-mean-field  structural glasses, it is thus natural to consider hierarchical versions of the REM and PSM \cite{castellana2010hierarchical}: The hierarchical structure of the interactions would then allow for an implementation of RG methods  suitable for studying these systems in the thermodynamic limit. \\

In this paper, we will study a REM and a PSM built on a hierarchical lattice: The hierarchical random energy model (HREM), and the hierarchical $p$-spin model (HPS).  The HREM has been recently proposed  as a non-mean-field model for a structural glass and studied with  perturbative and  numerical methods: These studies suggested that the HREM has a finite-temperature Kauzmann transition characterized by a vanishing entropy at low temperatures \cite{castellana2010hierarchical}, in agreement with the predictions of the RFOT theory of glasses. The HPS, first introduced in this paper, is a candidate  model to study whether the existence of a Kauzmann transition in the PSM \cite{castellani2005spin,biroli2009random,crisanti1995thouless} holds beyond the mean-field scenario \cite{castellana2010hierarchical}.  \\

The paper is organized as follows: In Section \ref{sec1} we define the HREM, we prove the existence of the thermodynamic limit and self-averaging of the free energy, and we derive a mean-field upper bound for the Kauzmann critical temperature. Then, we show how this upper bound can be improved by exploiting the hierarchical structure of the model. In Section \ref{sec2} we introduce the HPS, and we prove the existence of the thermodynamic limit and self-averaging of the free energy. Finally, Section \ref{sec3} is devoted to the discussion of the results and to an outlook on topics of future studies.

\section{Hierarchical Random Energy Model}\label{sec1}

The HREM \cite{castellana2010hierarchical} is a system of $2^{k+1}$ Ising spins $S_i = \pm 1$ labeled by index $i=1,2,\cdots, 2^{k+1}$, whose Hamiltonian is given by the following

\begin{defn}\label{def1}
The Hamiltonian of the hierarchical random energy model (HREM) is defined recursively by the equation
\be
H_{k+1}[\vec{S}] = H_k^1[\vec{S}_1] + H_k^2[\vec{S}_2] +2^{(k+1)\frac{1-\sigma}{2}} \epsilon_{k+1}[\vec{S}],
\ee
where $\vec{S} \equiv \{S_i\}_{1 \leq i \leq 2^{k+1}}$, and $\vec{S}_1 \equiv \{S_i\}_{1 \leq i \leq 2^k}$,  $\vec{S}_2 \equiv \{S_i\}_{2^k+1 \leq i \leq 2^{k+1}}$ are the spins in the left and right half respectively, $H_k^1[\vec{S}_1]$ and $H_k^2[\vec{S}_2]$ are independent,
$\{ \epsilon_{k}[\vec{S}]\}_{k,\vec{S}}$ are IID Gaussian random variables with zero mean and unit variance, and $\sigma$ is a real number. We assign a single-spin  energy $H_0[S] = \epsilon_0[S]$ to each spin, where $\epsilon_0[S]$ is a Gaussian random variable with zero mean and unit variance, and the $2^{k+1}$ single-spin energies are all independent. 
\end{defn}
In Definition \ref{def1} the number $\sigma$ determines how fast spin-spin interactions decrease with distance: The larger $\sigma$, the faster the interaction decrease \cite{dyson1969existence}. In particular, for $\sigma >0$ the HREM is a non-mean-field model, because the variance of the interaction energy between spin blocks $\vec{S}_1$, $\vec{S}_2$, i.e.
\be
\mathbb{E}\big[(2^{(k+1)\frac{1-\sigma}{2}} \epsilon_{k+1}[\vec{S}])^2\big] = 2^{(k+1)(1-\sigma)},
\ee
is subextensive in the system volume $2^{k+1}$ \cite{castellana2010hierarchical}.\\

Unlike the REM, the energies of the HREM are correlated random variables: Given two spin configurations $\vec{S}$, $\vec{S}'$, the energies $H_{k+1}[\vec{S}]$, $H_{k+1}[\vec{S}']$ are not independent. In recent years, Contucci \textit{et al.} showed \cite{contucci2003thermodynamical} that the thermodynamic limit of the quenched free energy of a REM with correlated energies exists under some sufficient conditions on the energy correlations. Namely, given  a family $\{E_{N}[\vec{S}]\}_{\vec{S} \in \Sigma_N}$ of $2^N$ Gaussian random variables with zero mean and unit variance with Hamiltonian $H_N[\vec{S}] \equiv - \sqrt{N} E_N[\vec{S}] $, and given a decomposition $N = N_1 + N_2$ and the projections $\pi_i(\vec{S}) $ of $\vec{S} \in \Sigma_N$ into $\Sigma_{N_i}$ ($i=1,2$), the thermodynamic limit of the quenched free energy exists if 
\be\label{eq26}
\mathbb{E}[H_N[\vec{S}]H_N[\vec{S}']] \leq \mathbb{E}[H_{N_1}[\pi_1(\vec{S})]H_{N_1}[\pi_1(\vec{S}')]] + \mathbb{E}[H_{N_2}[\pi_2(\vec{S})]H_{N_2}[\pi_2(\vec{S}')]]
\ee
for any $\vec{S}, \vec{S}'$ and for any decomposition \cite{contucci2003thermodynamical},  where $\mathbb{E}[]$ denotes the expectation with respect to all random variables. Unfortunately, the condition (\ref{eq26}) does not apply to the HREM: Indeed, for $N = 2^{k+1}$, $N_1 = N_2 = 2^k$ and $\vec{S} = \vec{S}'$, from Definition \ref{def1} we have $\pi_i(\vec{S}) = \vec{S}_i$ ($ i=1,2$) and
\beas
\mathbb{E}[H_{k+1}[\vec{S}]^2] & = & \mathbb{E}[H_k^1[\vec{S}_1]^2] + \mathbb{E}[H_k^2[\vec{S}_2]^2] + 2^{(k+1)(1-\sigma)} \\
 & > & \mathbb{E}[H_k^1[\vec{S}_1]^2] + \mathbb{E}[H_k^2[\vec{S}_2]^2]. 
\eeas

In what follows, we will prove the existence of the thermodynamic limit for the free energy of the HREM with a recursive method that leverages the hierarchical structure of the model.
Let us introduce the partition function
\be\label{eq59}
Z_{k+1} \equiv \sum_{\vec{S}} \exp(- \beta H_{k+1}[\vec{S}]),
\ee
and the free energy
\be\label{eq46}
f_{k+1} \equiv \frac{1}{2^{k+1}} \mathbb{E} \left[\log Z_{k+1} \right],
\ee
where in what follows $\langle \rangle$ denotes the average associated with the \textit{Boltzmannfaktor} (\ref{eq59}), the inverse temperature $\beta$ is a non-negative number, and $\mathbb{E}[]$ denotes the expectation with respect to all random variables.

\subsection{Thermodynamic limit and self-averaging of the free energy}

We will first prove the existence of the thermodynamic limit for  the quenched free energy with the following

\begin{thm}\label{thm5}
If $\sigma > 0$,  the infinite-volume free energy  
\be
f \equiv \lim_{k \rightarrow \infty} f_{k+1}
\ee exists.  
\end{thm}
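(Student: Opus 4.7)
The plan is to show that $\{f_{k+1}\}$ is monotonically non-decreasing with geometrically summable increments, whose total mass converges precisely under the hypothesis $\sigma>0$. The key idea is a Guerra--Toninelli-style Gaussian interpolation between the full model $H_{k+1}$ and the decoupled Hamiltonian $H_k^1[\vec{S}_1] + H_k^2[\vec{S}_2]$, whose partition function factorizes as $Z_k^1 Z_k^2$ and whose quenched free energy per spin is exactly $f_k$ by independence of the two halves.

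Concretely, I would introduce the interpolating Hamiltonian
\be
H_{k+1}(t)[\vec{S}] = H_k^1[\vec{S}_1] + H_k^2[\vec{S}_2] + \sqrt{t}\,2^{(k+1)\frac{1-\sigma}{2}}\epsilon_{k+1}[\vec{S}], \qquad t\in[0,1],
\ee
with partition function $Z_{k+1}(t)$, and differentiate $\mathbb{E}[\log Z_{k+1}(t)]$ in $t$. Since the family $\{\epsilon_{k+1}[\vec{S}]\}_{\vec{S}}$ is IID standard Gaussian and independent of $H_k^1, H_k^2$, Gaussian integration by parts in $\epsilon_{k+1}[\vec{S}]$ cancels the apparent $1/\sqrt{t}$ singularity and yields
\be
\frac{d}{dt}\mathbb{E}[\log Z_{k+1}(t)] = \frac{\beta^2}{2}\,2^{(k+1)(1-\sigma)}\left(1 - \mathbb{E}\!\left[\sum_{\vec{S}} p_{\vec{S}}(t)^2\right]\right),
\ee
where $p_{\vec{S}}(t)$ is the Gibbs weight associated with $H_{k+1}(t)$. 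Because $\sum_{\vec{S}} p_{\vec{S}}(t)^2 \in [0,1]$, the derivative lies in $[0,\tfrac{\beta^2}{2}\,2^{(k+1)(1-\sigma)}]$. Integrating over $t\in[0,1]$, dividing by the volume $2^{k+1}$, and using $\mathbb{E}[\log Z_{k+1}(0)] = 2\,\mathbb{E}[\log Z_{k}] = 2^{k+1}f_k$, gives the one-step bound
\be\label{plan:incr}
0 \leq f_{k+1} - f_k \leq \frac{\beta^2}{2}\,2^{-(k+1)\sigma}.
\ee

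From (\ref{plan:incr}), the sequence is non-decreasing, and when $\sigma>0$ the upper bounds form a convergent geometric series, so $\{f_{k+1}\}$ is bounded above by $f_0 + \frac{\beta^2}{2(2^\sigma-1)}$ and converges by monotone convergence; the base value $f_0 = \mathbb{E}[\log(2\cosh\beta\epsilon_0[S])]$ is manifestly finite. The only non-routine calculation is the Gaussian integration-by-parts identity for the $t$-derivative, and the structural observation that drives the argument is that the interaction variance $2^{(k+1)(1-\sigma)}$ grows strictly slower than the volume $2^{k+1}$ exactly when $\sigma>0$, so each hierarchical level contributes a geometrically decaying amount to the free energy per spin. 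The main conceptual subtlety is that this argument never invokes the subadditivity condition (\ref{eq26}), which we already know fails for the HREM; instead, monotonicity and summability replace subadditivity as the source of convergence. Note that this proof yields existence only of the limit of $f_{k+1}=\mathbb{E}[\log Z_{k+1}]/2^{k+1}$; the self-averaging (concentration of $2^{-(k+1)}\log Z_{k+1}$ about $f_{k+1}$) would be a separate step, most naturally obtained by applying the Gaussian Poincar\'e/concentration inequality to $\log Z_{k+1}$ viewed as a Lipschitz function of the full tree of couplings $\{\epsilon_j[\cdot]\}_{0\leq j\leq k+1}$.
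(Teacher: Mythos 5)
Your proposal is correct and follows essentially the same Guerra--Toninelli interpolation as the paper: the same interpolating Hamiltonian, the same integration by parts giving $d\phi_{k+1,t}/dt \geq 0$ and hence $f_{k+1}\geq f_k$, and the same geometric bound on the increments. The only (harmless) difference is that you extract the upper bound $f_{k+1}-f_k \leq \frac{\beta^2}{2}2^{-(k+1)\sigma}$ directly from the upper bound on the interpolation derivative, whereas the paper obtains it separately via Jensen's inequality applied to $\mathbb{E}_{\epsilon_{k+1}}[Z_{k+1}]$; both yield the identical estimate.
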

\begin{proof}
We will prove the existence of the thermodynamic limit of the free energy by using an interpolation method originally introduced for spin glasses \cite{guerra2002thermodynamic,guerra2003broken}: Given a number $0 \leq t \leq 1$, we introduce the interpolating Hamiltonian 
\be \label{eq1}
H_{k+1,t}[\vec{S}] \equiv H_k^1[\vec{S}_1] + H_k^2[\vec{S}_2] + \sqrt{t} \; 2^{(k+1)\frac{1-\sigma}{2}} \epsilon_{k+1}[\vec{S}],
\ee
and the associated partition function and free energy
\bea \label{eq2}
Z_{k+1,t} &\equiv &\sum_{\vec{S}} \exp(-\beta H_{k+1,t}[\vec{S}]), \\ \label{eq3}
\phi_{k+1,t} &\equiv &\frac{1}{2^{k+1}} \mathbb{E} [\log Z_{k+1,t}].
\eea
From Eqs. (\ref{eq46}), (\ref{eq1}), (\ref{eq2}), (\ref{eq3}) we obtain the values of $\phi_{k+1,t}$ for $t = 0,1$
\be\label{eq6}
\phi_{k+1,1} =  f_{k+1},
\ee
and
\bea \label{eq7} 
\phi_{k+1,0} & = &  \frac{1}{2^{k+1}} \mathbb{E} \left[\log \sum_{\vec{S}} \exp[-\beta(H_k^1[\vec{S}_1] + H_k^2[\vec{S}_2])] \right] \\ \nn
 & = &  \frac{1}{2^{k+1}} \Bigg\{ \mathbb{E} \Bigg[\log \sum_{\vec{S}_1} \exp(-\beta H_k^1[\vec{S}_1])\Bigg] +  \mathbb{E} \Bigg[\log \sum_{\vec{S}_2} \exp(-\beta H_k^2[\vec{S}_2]) \Bigg] \Bigg\} \\  \nn
& = & f_k.
\eea
To complete the interpolation, we compute the derivative of $\phi_{k+1,t}$ with respect to $t$.
Given an integer  $n\geq1$ and a function $g: \Sigma_{2^{k+1}}^{n} \rightarrow \mathbb{R}$, we set 
\be\label{eq25}
\langle g \rangle_t \equiv \frac{1}{Z_{k+1,t}^n} \sum_{\{\vec{S} \}}\exp\left(-\beta \sum_{a=1}^n H_{k+1,t}[\vec{S}^a]\right) g(\{ \vec{S} \}),
\ee
where $\{ \vec{S} \} \equiv \{ \vec{S}^1, \cdots, \vec{S}^n \}$. From Eqs. (\ref{eq1}), (\ref{eq2}), (\ref{eq3}) we have
\bea\label{eq4}
\frac{d \phi_{k+1,t}}{dt} &= &    - \frac{\beta }{2 \; 2^{(k+1)\frac{1+\sigma}{2}} \sqrt{t}} \mathbb{E}[\langle \epsilon_{k+1}[\vec{S}]\rangle_t] \\ \nn
&= &    \frac{ \beta^2 }{2\;  2^{(k+1) \sigma }} \hspace{-0.08cm}\Bigg(\hspace{-0.1cm} 1 - \mathbb{E}\Bigg[\frac{1}{Z_{k+1,t}^2} \sum_{\vec{S}^1, \vec{S}^2} \hspace{-0.1cm} \exp[-\beta (H_{k+1,t}[\vec{S}^1] + H_{k+1,t}[\vec{S}^2])] \mathbb{I}(\vec{S}^1 = \vec{S}^2) \Bigg] \Bigg) \\ \nn
&=&    \frac{ \beta^2 }{2\;  2^{(k+1) \sigma }} ( 1 - \mathbb{E}[ \langle \mathbb{I} (\vec{S}^1 = \vec{S}^2) \rangle_t ] ),
\eea
where in the third line of Eq. (\ref{eq4}) we integrated by parts with respect to $\epsilon_{k+1}[\vec{S}]$, and we introduced the function $\mathbb{I}(\vec{S}^1 = \vec{S}^2)$, which is equal to one if $S^1_i = S^2_i \;\; \forall i = 1, \cdots,  2^{k+1}$ and zero otherwise. From Eq. (\ref{eq4}) we obtain
\be \label{eq5}
 \frac{d \phi_{k+1,t}}{dt} \geq 0. 
\ee
We now use Eq. (\ref{eq5}) to compare $f_{k+1}$ with $f_k$. Putting the identity
\be\label{eq8}
\phi_{k+1,1} = \phi_{k+1,0} + \int_0^1 \frac{d \phi_{k+1,t} } {dt} dt 
\ee
together with Eqs.  (\ref{eq6}), (\ref{eq7}), (\ref{eq5}), we obtain
\be\label{eq11}
f_{k+1} \geq f_k.
\ee
To complete the proof, we show that the free energy $f_{k+1}$ is bounded above. To do so, we denote by $\mathbb{E}_{\epsilon_{k+1}}[]$ the expectation value over the random variables $\{ \epsilon_{k+1}[\vec{S}]\}_{\vec{S}}$ in Definition \ref{def1}, and we have
\bea \label{eq10}
f_{k+1} & = & \frac{1}{2^{k+1}} \mathbb{E}[ \mathbb{E}_{\epsilon_{k+1}}[ \log Z_{k+1}]] \\ \nn
& \leq & \frac{1}{2^{k+1}} \mathbb{E}[ \log \mathbb{E}_{\epsilon_{k+1}}[  Z_{k+1}]] \\ \nn
 & = & f_k +  \frac{\beta^2}{2} \, 2^{-(k+1)\sigma},
\eea
where in the second line of Eq. (\ref{eq10}) we used Jensen's inequality, and in the third line we computed explicitly the Gaussian integral over the random energy $\epsilon_{k+1}[\vec{S}]$. We now iterate recursively  Eq. (\ref{eq10}) for $k+1, k, k-1, \cdots, 0$, and we obtain
\bea\label{eq200}
f_{k+1} & \leq & \frac{\beta^2}{2} \, \frac{1}{2^{\sigma}-1} +  \mathbb{E}\Bigg[ \log \sum_{S = \pm1} \exp(-\beta \epsilon_0[S]) \Bigg]  \\ \nn
 & < & \infty,
\eea
where in Eq. (\ref{eq200}) we used the condition $\sigma>0$. Equations (\ref{eq11}), (\ref{eq200}) show that the sequence $k \rightarrow f_{k+1}$ is non-decreasing and bounded above respectively: Thus, $\lim_{k\rightarrow \infty} f_{k+1}$ exists. \end{proof}

We will now prove that the free energy of the HREM is self-averaging in the thermodynamic limit. For the REM, a standard method to prove the self-averaging property of sample-dependent thermodynamic quantities consists in computing the ratio between the variance and the square of the mean, and showing that this ratio vanishes in the thermodynamic limit \cite{mezard2009information}. Unfortunately, this method cannot be applied directly to the HREM because of the presence of correlations between the energy levels \cite{derrida1980randomlimit}. Still, for the HREM sample-to-sample fluctuations can be estimated with an alternative method: By introducing an auxiliary Hamiltonian that interpolates between two HREMs with different disorder realizations, one can control the sample-to-sample fluctuations of the free energy  \cite{albeverio2003lectures,toninelli2002rigorous}, and prove the self-averaging property with the following 

\begin{thm}\label{thm6}
If $\sigma > 0$, then
\be\label{eq106}
\lim_{k\rightarrow \infty} \frac{1}{2^{k+1}} \log Z_{k+1} = f \; \;  \textrm{ with probability } 1. 
\ee
\end{thm}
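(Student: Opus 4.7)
The plan is to show that $F_{k+1} \equiv \frac{1}{2^{k+1}}\log Z_{k+1}$ concentrates sufficiently fast around its mean $f_{k+1}$, so that combined with Theorem \ref{thm5} (which guarantees $f_{k+1}\to f$) one obtains $F_{k+1}\to f$ almost surely. The concrete strategy is Chebyshev plus Borel--Cantelli: if I can establish that $\textrm{Var}(F_{k+1}) = O(2^{-(k+1)})$, then $\sum_{k\geq 0}\textrm{Var}(F_{k+1})$ converges, so by Chebyshev $\sum_k \mathbb{P}(|F_{k+1}-f_{k+1}|>\delta)$ converges for every $\delta>0$, and the Borel--Cantelli lemma yields $|F_{k+1}-f_{k+1}|\to 0$ almost surely.

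To bound the variance I would follow the Albeverio--Toninelli recipe already indicated in the excerpt and introduce two independent copies $\{\epsilon^{(0)}_j[\vec{S}]\}$ and $\{\epsilon^{(1)}_j[\vec{S}]\}$ of the disorder. For $t\in[0,1]$, define the interpolated disorder $\epsilon^{(t)}_j[\vec{S}] \equiv \sqrt{1-t}\,\epsilon^{(0)}_j[\vec{S}] + \sqrt{t}\,\epsilon^{(1)}_j[\vec{S}]$, together with the partition function $\mathcal{Z}_{k+1}(t)$ obtained by substituting $\epsilon^{(t)}$ for $\epsilon$ in Definition \ref{def1}. Because $\epsilon^{(t)}_j[\vec{S}]$ is again a standard Gaussian, $\log\mathcal{Z}_{k+1}(t)$ is equidistributed with $\log Z_{k+1}$ for every $t$. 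I would then consider
\be
\psi(t)\equiv\mathbb{E}\bigl[\log\mathcal{Z}_{k+1}(0)\,\log\mathcal{Z}_{k+1}(t)\bigr],
\ee
and observe that $\psi(0)=\mathbb{E}[(\log Z_{k+1})^2]$ while $\psi(1)=(\mathbb{E}\log Z_{k+1})^2$, since at $t=1$ the two copies are independent. Therefore $\textrm{Var}(\log Z_{k+1})=\psi(0)-\psi(1)=-\int_0^1\psi'(t)\,dt$.

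The main technical step is a uniform bound on $|\psi'(t)|$. Differentiating and applying Gaussian integration by parts with respect to both $\epsilon^{(0)}$ and $\epsilon^{(1)}$---exactly the manipulation already used in Eq.~(\ref{eq4}) to compute $d\phi_{k+1,t}/dt$---yields an expression controlled by two-replica overlap indicators of the form $\mathbb{E}[\langle \mathbb{I}(\vec{S}^1=\vec{S}^2)\rangle_t]$, weighted by the variances $2^{j(1-\sigma)}$ at each hierarchical level. At level $j$ the number of independent blocks is $2^{k+1-j}$, so the level-$j$ contribution is of order $2^{k+1-j\sigma}$; summing over $j=0,1,\dots,k+1$ gives a geometric series bounded by $C_\sigma\,2^{k+1}$ precisely because $\sigma>0$. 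Hence $\textrm{Var}(\log Z_{k+1})=O(2^{k+1})$ and $\textrm{Var}(F_{k+1})=O(2^{-(k+1)})$ as needed.

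The main obstacle I anticipate is the very one that forced $\sigma>0$ in Theorem \ref{thm5}: the top-level coupling carries variance $2^{(k+1)(1-\sigma)}$, so a priori the overlap terms generated by integration by parts could grow faster than $2^{k+1}$. The geometric summability across scales that rescues the bound is essentially the second-moment analogue of the estimate in Eq.~(\ref{eq10}), and it is what makes the two-copy interpolation preferable to the usual $\textrm{Var}/\textrm{mean}^2$ argument used for the ordinary REM, which is blocked by the inter-level correlations of the HREM energies. Once the variance bound is in hand, Chebyshev, Borel--Cantelli, and Theorem \ref{thm5} combine to give (\ref{eq106}).
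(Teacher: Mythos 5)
Your proposal is correct, and it is built on the same engine as the paper's proof: an interpolation between two independent copies of the disorder combined with the Borel--Cantelli lemma, with the key quantitative input in both cases being the bound $\sup_{\vec{S}}\mathbb{E}[H_{k+1}[\vec{S}]^2]\leq 2^{k+1}\,2^{\sigma}/(2^{\sigma}-1)$ of Eq.~(\ref{eq60}), i.e.\ the geometric summability across hierarchical levels that requires $\sigma>0$. The difference is one of strength rather than of route: the paper quotes the exponential (sub-Gaussian) concentration inequality (\ref{eq107}), which controls deviations at the shrinking scale $2^{-(k+1)/4}$, whereas you extract from the same interpolation only the second moment, $\mathrm{Var}(\log Z_{k+1})=O(2^{k+1})$, and compensate with Chebyshev at a fixed scale $\delta$ (you must then intersect over a countable family of $\delta$'s to conclude, which is routine). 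In fact the exponent in (\ref{eq107}) is exactly $u^{2}/(2\tau^{2})$ with $u=2^{3(k+1)/4}$ and $\tau^{2}=\beta^{2}\,2^{k+1}\,2^{\sigma}/(2^{\sigma}-1)$, so the two arguments share the identical variance proxy. Your variance computation is sound: after Gaussian integration by parts the replica terms internal to the $t$-system cancel, and what survives in $\psi'(t)$ is a cross term proportional to $(1-t)^{-1/2}\,\mathbb{E}[\langle \mathbb{E}[H_{k+1}[\vec{S}]H_{k+1}[\vec{S}']]\rangle]$ coupling the $t=0$ system to the $t$ system; the $(1-t)^{-1/2}$ singularity is integrable, the covariance is a nonnegative combination of block-overlap indicators weighted by $2^{j(1-\sigma)}$, and bounding each indicator by one reproduces your level-$j$ count $2^{k+1-j\sigma}$ and the bound $\mathrm{Var}(\log Z_{k+1})\leq \beta^{2}\,2^{k+1}\,2^{\sigma}/(2^{\sigma}-1)$. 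So the anticipated obstacle (the top-level variance $2^{(k+1)(1-\sigma)}$) is indeed harmless, and the argument closes as you describe.
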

\begin{proof}
By using the interpolation method mentioned above, one can prove \cite{albeverio2003lectures,toninelli2002rigorous} that
\be\label{eq107}
\mathbb{P}\left(\left| \frac{1}{2^{k+1}} \log Z_{k+1} - f_{k+1} \right| \geq \frac{1}{2^{(k+1)/4}} \right) \leq 2 \exp\left[ - 2^{(k+1)/2} \frac{2^{\sigma}-1}{2 \, 2^{\sigma} \beta^2 }\right]. 
\ee
Equation (\ref{eq107}) and Borel-Cantelli Lemma imply that 
\be
\left| \frac{1}{2^{k+1}} \log Z_{k+1} - f_{k+1} \right| \geq \frac{1}{2^{(k+1)/4}}
\ee
only for finitely many $k$, which proves Eq. (\ref{eq106}). 
\end{proof}

We will now focus on a thermodynamic quantity that plays an important role in models for structural glasses: The infinite-volume entropy. Indeed, a long-standing question in non-mean-field structural glasses is whether there is a freezing transition characterized by a vanishing entropy at low temperatures, and the critical temperature of this transition is known as the Kauzmann transition temperature \cite{kauzmann1948nature,biroli2009random}. \\

To introduce an infinite-volume entropy, we recall that the entropy for a HREM with a finite number of spins is 
\bea\label{eq71}
s_{k+1} &\equiv& - \beta \frac{d f_{k+1}}{d \beta} + f_{k+1}\\ \nn
& = &  \beta \frac{1}{2^{k+1}} \mathbb{E}[\langle H_{k+1}[\vec{S}]\rangle]  + f_{k+1}. 
\eea
The existence of the thermodynamic limit for the free-energy term $f_{k+1}$ in the first line of Eq. (\ref{eq71}) is proven by Theorem \ref{thm5}, while the existence of the $k\rightarrow \infty$ limit of the energy term $d f_{k+1}/d \beta$ does not follow from Theorem \ref{thm5}, and it requires further analysis. Given that $f$ is a convex function of $\beta$, its right (left) derivative exists, thus one possible way to define the infinite-volume entropy would be 
\be\label{eq202}
s_{\pm} \equiv - \beta \left. \frac{d f}{d \beta}\right|_{\pm} + f, 
\ee
where $\left. d f/d \beta\right|_{\pm}$ denotes the right (left) derivative of $f$. It is easy to show that the definition (\ref{eq202}) is not suitable for the method of proof that we will be using in the rest of the paper. For example, suppose that we want to prove a bound for an infinite-volume thermodynamic quantity: To do so, we will first prove the bound for any finite $k$, and then take the $k\rightarrow \infty$ limit, see for example Lemma \ref{lem3}. Since  in general $\left. d f/d \beta\right|_{\pm}$ cannot be written in a simple way as the the infinite-volume limit of $d f_{k+1}/d \beta$, Eq. (\ref{eq202}) does not allow one to write the infinite-volume entropy $s_{\pm}$ as the $k\rightarrow \infty$  limit of finite-volume thermodynamic quantities: Thus,  the definition (\ref{eq202}) is not suitable for our method of proof. A more natural definition of the infinite-volume entropy is the following: Given that $f_{k+1}$ is convex and differentiable and that $f_{k+1}$ converges pointwise to $f$ in an interval, we have  
\be\label{eq201}
\lim_{k\rightarrow \infty} \frac{d f_{k+1}}{d \beta} = \frac{d f}{d \beta},  
\ee
 for every value of $\beta$ where $f$ is differentiable \cite{talagrand2011mean1}. The ensemble of points $\beta$ such that $f$ is differentiable is everywhere except a countable set of exceptional points \cite{talagrand2011mean1}. In general, showing that such set of exceptional points does not exist is not an easy task: For example, for the Sherrington-Kirkpatrick model of a spin glass \cite{kirkpatrick1975solvable} the proof that the infinite-volume free energy is differentiable everywhere requires a detailed knowledge of the exact solution of the problem \cite{panchenko2008differentiability}. Assuming that $\beta$ is not one of the above exceptional points, in what follows we will define the infinite-volume entropy for a given $\beta$ as 
\be\label{eq98}
s \equiv \lim_{k \rightarrow \infty} s_{k+1}.
\ee 
According to this definition, $s$ is simply given by the $k \rightarrow \infty$ limit of finite-volume quantities, see Eqs. (\ref{eq71}), (\ref{eq98}): To prove---for example---a bound for $s$, we can simply prove the bound for $s_{k+1}$ first, and then take the $k\rightarrow \infty$ limit.

\subsection{Upper bound on the Kauzmann temperature}

We now establish two bounds for the Kauzmann temperature. For $\sigma > 0$, we set
\be\label{eq90}
\varphi(\beta) \equiv \mathbb{E}\Bigg[ \log \sum_{S = \pm1} \exp\Bigg[ - \beta \Bigg(\frac{2^\sigma}{2^\sigma-1} \Bigg)^{1/2} \epsilon_0[S]\Bigg] \Bigg],
\ee
and we have the following
\begin{thm} \label{thm3} (Mean-field bound for the Kauzmann temperature) For $\sigma > 0$ and $\beta$ such that the infinite-volume entropy $s$ exists, we have
\be\label{eq80}
s \geq \log 2 - \beta  \bigg(\frac{2^\sigma}{2^\sigma-1} 2 \log 2\bigg)^{1/2}. 
\ee
Hence, if there exists an inverse Kauzmann temperature $\beta_K$ such that $s = 0 $ for $\beta = \beta_K$, then 
\bea\label{eq81}
\beta_K \geq  \bigg(\frac{2^\sigma-1 }{2^\sigma}  \frac{\log 2}{2} \bigg)^{1/2}.
\eea
\end{thm}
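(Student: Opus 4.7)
The plan is to bound $s_{k+1}$ from below via the identity $s_{k+1} = f_{k+1} + \beta\,\mathbb{E}[\langle H_{k+1}[\vec{S}]\rangle]/2^{k+1}$ (the second line of Eq.~(\ref{eq71})), treating the two terms separately, and then passing to the limit $k\to\infty$ through Eq.~(\ref{eq98}). Equation (\ref{eq81}) will follow at once from Eq.~(\ref{eq80}) by setting $s = 0$ and solving for $\beta$.

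For the free-energy term, I would apply Jensen's inequality against the uniform measure on $\{\pm 1\}^{2^{k+1}}$: writing $Z_{k+1} = 2^{2^{k+1}}\cdot 2^{-2^{k+1}}\sum_{\vec{S}}e^{-\beta H_{k+1}[\vec{S}]}$ and using concavity of $\log$ gives $\log Z_{k+1} \geq 2^{k+1}\log 2 - \beta\, 2^{-2^{k+1}}\sum_{\vec{S}}H_{k+1}[\vec{S}]$, and taking $\mathbb{E}$ with $\mathbb{E}[H_{k+1}[\vec{S}]] = 0$ yields $f_{k+1}\geq\log 2$ uniformly in $k$. For the Gibbs-energy term, I would use the deterministic bound $\langle H_{k+1}[\vec{S}]\rangle\geq \min_{\vec{S}}H_{k+1}[\vec{S}]$, reducing matters to an estimate on the expected ground state. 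Each $H_{k+1}[\vec{S}]$ is a centered Gaussian with variance $u_{k+1}\equiv\mathbb{E}[H_{k+1}[\vec{S}]^2]$, and Definition~\ref{def1} yields the recursion $u_{k+1} = 2u_k + 2^{(k+1)(1-\sigma)}$ with $u_0=1$, whose solution gives $u_{k+1}/2^{k+1}\to v\equiv 2^\sigma/(2^\sigma-1)$ as $k\to\infty$ (this is exactly where $\sigma>0$ is used, via convergence of the geometric sum $\sum_j 2^{-j\sigma}$). The standard Gaussian tail estimate and a union bound over the $2^{2^{k+1}}$ configurations produce $\mathbb{P}(\min_{\vec{S}}H_{k+1}[\vec{S}]\leq -t)\leq 2^{2^{k+1}}\exp(-t^2/(2u_{k+1}))$; integrating in $t$ with the integral split at the threshold $t^*\equiv\sqrt{2\,u_{k+1}\,2^{k+1}\log 2}$ (where the right-hand side equals one) gives
\beni
-\mathbb{E}[\min_{\vec{S}}H_{k+1}[\vec{S}]]/2^{k+1}\;\leq\;\sqrt{2u_{k+1}\log 2/2^{k+1}}\,\bigl(1+O(2^{-(k+1)})\bigr),
\eeni
whose $k\to\infty$ limit is $\sqrt{2v\log 2}$.

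Combining the two estimates yields $s_{k+1}\geq \log 2 - \beta\sqrt{2u_{k+1}\log 2/2^{k+1}}(1+o(1))$, and passing to the limit via Eq.~(\ref{eq98}) produces Eq.~(\ref{eq80}); Eq.~(\ref{eq81}) follows by requiring $\log 2 - \beta_K\sqrt{2v\log 2}\leq s(\beta_K)=0$. The main technical point is to verify that the Gaussian-tail contribution past $t^*$ is negligible in the limit, which it is because that remainder is smaller than the leading term $t^*/2^{k+1}$ by a factor of order $2^{-(k+1)}$; alongside this, one must make sure that the recursion for $u_{k+1}$ is solved carefully enough to identify the limiting effective variance $v$ with the constant appearing in Eq.~(\ref{eq80}).
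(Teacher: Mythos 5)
Your proposal is correct, and its skeleton coincides with the paper's: decompose $s_{k+1}$ via the second line of Eq.~(\ref{eq71}), bound the energy term from below by $\mathbb{E}[\min_{\vec{S}} H_{k+1}[\vec{S}]]$ with the variance $\mathbb{E}[H_{k+1}[\vec{S}]^2]=\sum_{l=0}^{k+1}2^{k+1-l}2^{l(1-\sigma)}\leq 2^{k+1}\,2^\sigma/(2^\sigma-1)$, and bound the free-energy term below by $\log 2$, then pass to the limit. You differ in two places, both legitimately. First, the paper obtains $f\geq\log 2$ indirectly: it proves the stronger Lemma~\ref{lem4}, $f\geq\varphi(\beta)$, by a full interpolation argument (reabsorbing $\epsilon_{k+1}$ into block energies $e_1,e_2$), and then checks $\varphi(\beta)\geq\log 2$ by Jensen in the proof of Theorem~\ref{thm3}; your two-line Jensen bound against the uniform measure on configurations gets $f_{k+1}\geq\log 2$ directly and makes the proof of Theorem~\ref{thm3} self-contained, at the cost of not producing the machinery needed for the improved bound of Theorem~\ref{thm4} (which genuinely requires $f\geq\varphi(\beta)>\log 2$). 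Second, where the paper simply invokes the standard inequality $\mathbb{E}[\min_i g_i]>-\tau(2\log M)^{1/2}$ (Eq.~(\ref{eq99})), you rederive it by union bound and tail integration split at $t^*=\sqrt{2u_{k+1}2^{k+1}\log 2}$; your accounting of the remainder (smaller than $t^*$ by a factor $1/(2^{k+2}\log 2)$) is right, so this is a valid, if more laborious, substitute. Both routes land on Eq.~(\ref{eq80}), and Eq.~(\ref{eq81}) follows identically.
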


\begin{thm} \label{thm4} (Improvement over the mean-field bound for the Kauzmann temperature) For $\sigma > 0$ and $\beta$ such that the infinite-volume entropy $s$ exists, we have
\be\label{eq69}
s \geq \varphi(\beta) - \beta  \bigg(\frac{2^\sigma}{2^\sigma-1} 2 \log 2\bigg)^{1/2}. 
\ee
Hence, if there exists an inverse Kauzmann temperature $\beta_K$ such that $s = 0 $ for $\beta = \beta_K$, then 
\be\label{eq74}
\beta_K \geq \beta_{\mathlarger{\ast}}, 
\ee
where $\beta_{\mathlarger{\ast}}$ is the unique solution of
\be\label{eq57}
\varphi(\beta) - \beta \bigg(\frac{2^\sigma}{2^\sigma-1} 2 \log 2\bigg)^{1/2} = 0.
\ee
\end{thm}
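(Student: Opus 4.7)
The strategy mirrors Theorem \ref{thm3}: write $s=\lim_{k\rightarrow\infty}s_{k+1}$ with $s_{k+1}=f_{k+1}+(\beta/2^{k+1})\mathbb{E}[\langle H_{k+1}\rangle]$, and control each summand at finite $k$. For the energy, the argument used in Theorem \ref{thm3} already suffices: since $\{H_{k+1}[\vec{S}]\}_{\vec{S}}$ is a centered Gaussian family of $2^{2^{k+1}}$ variables each with variance at most $2^{k+1}V_{k+1}$, where $V_{k+1}\equiv\sum_{\ell=0}^{k+1}2^{-\ell\sigma}$ is the per-spin variance of $H_{k+1}$, the bound $\langle H\rangle\geq\min_{\vec{S}}H[\vec{S}]$ together with the classical Gaussian maximum estimate $\mathbb{E}[\max_{\vec{S}}(-H_{k+1}[\vec{S}])]\leq 2^{k+1}\sqrt{2V_{k+1}\log 2}$ yields $(1/2^{k+1})\mathbb{E}[\langle H_{k+1}\rangle]\geq -\sqrt{2V_{k+1}\log 2}$. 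The entire novelty of Theorem \ref{thm4} therefore lies in upgrading the crude convexity bound $f_{k+1}\geq\log 2$ used in Theorem \ref{thm3} to $f_{k+1}\geq\varphi_{V_{k+1}}(\beta)\equiv\mathbb{E}[\log\sum_{S=\pm 1}\exp(-\beta\sqrt{V_{k+1}}\epsilon_0[S])]$, which converges to $\varphi(\beta)$ as $k\rightarrow\infty$.

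The sharper free-energy bound is obtained by a Guerra-type Gaussian interpolation that swaps all block-level interactions in $H_{k+1}$ for an independent single-spin Gaussian field carrying the matching total variance. Letting $R_{k+1}[\vec{S}]\equiv H_{k+1}[\vec{S}]-\sum_i\epsilon_0^i[S_i]$ collect the contributions from levels $\ell\geq 1$, and introducing an auxiliary family $\{\zeta_i[S]\}$ of IID standard Gaussians independent of everything, one sets, for $0\leq t\leq 1$,
\be
H_{k+1,t}[\vec{S}]\equiv\sum_i\epsilon_0^i[S_i]+\sqrt{t}\,R_{k+1}[\vec{S}]+\sqrt{(1-t)(V_{k+1}-1)}\sum_i\zeta_i[S_i].
\ee
At $t=1$ this recovers $H_{k+1}$, so $\phi_{k+1,1}=f_{k+1}$; at $t=0$ the spins decouple with effective per-spin noise of variance $V_{k+1}$, so $\phi_{k+1,0}=\varphi_{V_{k+1}}(\beta)$. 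Because $\mathbb{E}[R_{k+1}[\vec{S}]^2]=2^{k+1}(V_{k+1}-1)$ equals the variance of the $\zeta$-term, the diagonal contributions in the Gaussian integration by parts (of the same kind as in (\ref{eq4})) cancel, leaving
\be
\frac{d\phi_{k+1,t}}{dt}=\frac{\beta^2}{2\cdot 2^{k+1}}\,\mathbb{E}\bigl[\langle K_C(\vec{S}^1,\vec{S}^2)-K_B(\vec{S}^1,\vec{S}^2)\rangle_t\bigr],
\ee
with $K_B(\vec{S}^1,\vec{S}^2)\equiv\sum_{\ell=1}^{k+1}\sum_{b\in B_\ell}2^{\ell(1-\sigma)}\mathbb{I}(\vec{S}^1|_b=\vec{S}^2|_b)$ and $K_C(\vec{S}^1,\vec{S}^2)\equiv(V_{k+1}-1)\sum_i\delta_{S^1_i,S^2_i}$.

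The delicate step, and what I expect to be the main obstacle, is the pointwise inequality $K_B\leq K_C$, which is where the hierarchical tree structure enters decisively. Fix $\ell\geq 1$; every block $b\in B_\ell$ on which $\vec{S}^1=\vec{S}^2$ contributes $2^\ell$ agreeing single spins, so $2^\ell N_{a,\ell}\leq N_a$ with $N_{a,\ell}\equiv\#\{b\in B_\ell:\vec{S}^1|_b=\vec{S}^2|_b\}$ and $N_a\equiv\sum_i\delta_{S^1_i,S^2_i}$, and hence $2^{\ell(1-\sigma)}N_{a,\ell}\leq 2^{-\ell\sigma}N_a$; summing over $\ell$ and using $\sum_{\ell=1}^{k+1}2^{-\ell\sigma}=V_{k+1}-1$ yields $K_B\leq K_C$. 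Consequently $d\phi_{k+1,t}/dt\geq 0$, so $f_{k+1}\geq\varphi_{V_{k+1}}(\beta)$, and letting $k\rightarrow\infty$ (using $V_{k+1}\uparrow V\equiv 2^\sigma/(2^\sigma-1)$ and continuity of $\varphi_V(\beta)$ in $V$) gives $f\geq\varphi(\beta)$, which combined with the energy bound proves (\ref{eq69}). The bound (\ref{eq74}) then follows from the fact that $g(\beta)\equiv\varphi(\beta)-\beta\sqrt{2V\log 2}$ is strictly decreasing with $g(0)=\log 2>0$: $\varphi$ is convex with limiting slope $\lim_{\beta\rightarrow\infty}\varphi'(\beta)=\sqrt{V/\pi}$ (obtained from the zero-temperature expansion for a single two-state spin), and since $1/\pi<2\log 2$ we have $\varphi'(\beta)\leq\sqrt{V/\pi}<\sqrt{2V\log 2}$ everywhere, so $g'<0$, $g$ admits a unique positive zero $\beta_{\mathlarger{\ast}}$, and the bound $s(\beta)\geq g(\beta)>0$ for $\beta<\beta_{\mathlarger{\ast}}$ forces $\beta_K\geq\beta_{\mathlarger{\ast}}$.
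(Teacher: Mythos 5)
Your proof is correct, and its overall architecture matches the paper's: the entropy is bounded below by $f$ minus the Gaussian minimum estimate (the paper's Lemma \ref{lem3}, which you reproduce verbatim), the real content is the free-energy lower bound $f \geq \varphi(\beta)$ (the paper's Lemma \ref{lem4}), and uniqueness of $\beta_{\mathlarger{\ast}}$ comes from monotonicity of $\varphi(\beta) - \beta(2V\log 2)^{1/2}$ (the paper's Lemma \ref{lem2}). Where you genuinely diverge is in how the free-energy bound is proved. The paper interpolates one hierarchical level at a time: the top-level energy $\epsilon_{k+1}[\vec{S}]$ is traded for two independent half-block fields $e_1[\vec{S}_1], e_2[\vec{S}_2]$, whose variance is then absorbed into the coefficient of $\epsilon_k$, producing the recursion $x \mapsto (1+x^2/2^\sigma)^{1/2}$ in Eq.~(\ref{eq47}) and requiring only the single-level inequality $\tfrac12[\mathbb{I}(\vec{S}^1_1=\vec{S}^2_1)+\mathbb{I}(\vec{S}^1_2=\vec{S}^2_2)]\geq\mathbb{I}(\vec{S}^1=\vec{S}^2)$ at each step. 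You instead run one global interpolation that swaps all levels $\ell\geq 1$ simultaneously for a single-spin Gaussian field of matching total variance, which requires the explicit tree decomposition of the covariance of $H_{k+1}$ and the aggregate inequality $K_B\leq K_C$; your combinatorial argument for it (each agreeing block of size $2^\ell$ contributes $2^\ell$ agreeing spins, the blocks at a fixed level are disjoint, hence $2^{\ell(1-\sigma)}N_{a,\ell}\leq 2^{-\ell\sigma}N_a$, then sum over $\ell$) is sound and is precisely the telescoped form of the paper's recursion. Your version reaches $\phi_{k+1,0}=\varphi_{V_{k+1}}(\beta)$ in one pass and makes the role of the covariance comparison transparent; the paper's recursive version is lighter on bookkeeping since it never needs the full multi-level covariance. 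Finally, for monotonicity you bound $\varphi'(\beta)$ by its limiting slope $\sqrt{V/\pi}$ via convexity of $\varphi$ and the exact value $\mathbb{E}[\max(g_1,g_2)]=1/\sqrt{\pi}$, which is sharper than the paper's use of Eq.~(\ref{eq99}) giving $\varphi'(\beta)<\sqrt{2V\log 2}$; both yield strict decrease and hence uniqueness of $\beta_{\mathlarger{\ast}}$, and the conclusion $\beta_K\geq\beta_{\mathlarger{\ast}}$ follows identically.
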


For the sake of clarity, we note that the lower bounds (\ref{eq81}), (\ref{eq74}) for the inverse Kauzmann critical temperature $\beta_K$ are upper bounds for the Kauzmann critical temperature $T_K \equiv 1/\beta_K$. \\

Before proving Theorems \ref{thm3}, \ref{thm4}, it is important to point out that we refer to Eq. (\ref{eq81}) as the mean-field bound because the inverse critical temperature in the right-hand side (RHS) of Eq. (\ref{eq81}) is proportional to the inverse critical temperature in the mean-field approximation. Indeed, the mean-field approximation of the HREM can be easily obtained by assuming that the energy levels $\{H_{k+1}[\vec{S}]\}_{\vec{S}}$ are independent: In this case, the HREM reduces to a REM with a rescaled inverse critical temperature \cite{castellana2010hierarchical}
\be
\beta_c =  \bigg( \frac{2^{\sigma}-1}{2^\sigma} 2 \log 2 \bigg)^{1/2},  
\ee
which is proportional to the RHS of Eq. (\ref{eq81}) up to a constant factor independent of $\sigma$. Hence, Theorem \ref{thm3}  shows that---up to a constant factor---the mean-field critical temperature is an upper bound for the critical temperature of the system. In addition, in what follows we will show that bound (\ref{eq74}) provides an improvement over bound (\ref{eq81}). \\

Let us now prove Theorems \ref{thm3}, \ref{thm4}. The proofs are based on a lower bound for the infinite-volume entropy, and they will be split into separate Lemmas. First, we prove the following lower bound for the entropy

\begin{lem}\label{lem3}
Given $\sigma >0$ and $\beta$ such that the infinite-volume entropy $s$ exists, then
\be\label{eq66}
s \geq f- \beta \bigg(\frac{2^\sigma}{2^\sigma-1} 2 \log 2\bigg)^{1/2} .  
\ee
\end{lem}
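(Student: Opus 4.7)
The plan is to reduce the lemma to a Gaussian maximum inequality applied to the finite-volume Hamiltonian, and then take $k\to\infty$. From Eq.\ (\ref{eq71}) one has the identity
\be
s_{k+1} - f_{k+1} = \frac{\beta}{2^{k+1}}\, \mathbb{E}[\langle H_{k+1}[\vec{S}]\rangle],
\ee
so the target bound (\ref{eq66}) follows from the finite-volume estimate
\be
\frac{1}{2^{k+1}}\, \mathbb{E}[\langle H_{k+1}[\vec{S}]\rangle] \geq -\sqrt{\frac{2^{\sigma}}{2^{\sigma}-1}\, 2\log 2}
\ee
for every $k$, together with Theorem \ref{thm5} (to send $f_{k+1}\to f$) and the assumed existence of $s=\lim_{k\to\infty} s_{k+1}$ from (\ref{eq98}).

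For the finite-volume step, I would start from the pointwise inequality $\langle H_{k+1}[\vec{S}]\rangle \geq \min_{\vec{S}} H_{k+1}[\vec{S}] = -\max_{\vec{S}}(-H_{k+1}[\vec{S}])$, valid for every disorder realization because the Boltzmann average is a convex combination of the energies. The family $\{-H_{k+1}[\vec{S}]\}_{\vec{S}}$ consists of $2^{2^{k+1}}$ centered Gaussians, all sharing the same variance; by iterating the recursion $V_{k+1} = 2 V_k + 2^{(k+1)(1-\sigma)}$ read off from Definition \ref{def1} with $V_0 = 1$, one finds
\be
V_{k+1} \equiv \mathrm{Var}(H_{k+1}[\vec{S}]) = 2^{k+1}\sum_{j=0}^{k+1} 2^{-j\sigma} \leq \frac{2^{\sigma}}{2^{\sigma}-1}\, 2^{k+1},
\ee
where the last inequality uses $\sigma>0$ to sum the geometric series. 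The standard softmax/Chernoff argument ($\lambda\, \mathbb{E}[\max_i X_i] \leq \log \sum_i \mathbb{E}[e^{\lambda X_i}] \leq \log n + \lambda^2 V/2$, optimized over $\lambda>0$) gives the Gaussian maximum bound $\mathbb{E}[\max_i X_i]\leq \sqrt{2V\log n}$ for $n$ centered Gaussians of variance at most $V$, so with $n = 2^{2^{k+1}}$ and $V = V_{k+1}$ I obtain
$$\mathbb{E}\big[\max_{\vec{S}}(-H_{k+1}[\vec{S}])\big]\leq \sqrt{2\,V_{k+1}\cdot 2^{k+1}\log 2}.$$

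Dividing by $2^{k+1}$ and inserting the bound on $V_{k+1}/2^{k+1}$ yields the desired finite-volume estimate, hence the finite-volume version $s_{k+1}\geq f_{k+1} - \beta\sqrt{2^{\sigma}/(2^{\sigma}-1)\cdot 2\log 2}$ of (\ref{eq66}); passing to $k\to\infty$ completes the proof. The main point to get right is the balancing of two exponentials in $k$: the number $2^{2^{k+1}}$ of configurations and the per-configuration variance $V_{k+1}$, both super-polynomially large, must be normalized by the volume $2^{k+1}$ in such a way that the exponential factors conspire to a $k$-independent constant; this is precisely why $\sigma>0$ enters, since otherwise the geometric sum defining $V_{k+1}/2^{k+1}$ diverges. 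A minor wastefulness is the passage $\langle H\rangle\geq \min H$, which discards the temperature-dependent Gibbs weights but is already tight at the exponential scale relevant here, in line with the REM mean-field picture underlying Theorem \ref{thm3}.
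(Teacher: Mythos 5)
Your proposal is correct and follows essentially the same route as the paper: bound the Gibbs average below by $\min_{\vec{S}} H_{k+1}[\vec{S}]$, apply the extreme-value bound for (possibly correlated) centered Gaussians with the variance computation $V_{k+1}=2^{k+1}\sum_{l=0}^{k+1}2^{-l\sigma}\leq 2^{k+1}\,2^{\sigma}/(2^{\sigma}-1)$, and pass to the limit using Theorem \ref{thm5} and the assumed existence of $s$. The only difference is cosmetic: you derive the Gaussian maximum inequality via the Chernoff/softmax argument, whereas the paper simply cites it as Eq.~(\ref{eq99}).
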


\begin{proof}
We have
\be\label{eq100}
\frac{1}{2^{k+1}}\mathbb{E}[\langle H_{k+1}[\vec{S}]\rangle] \geq \frac{1}{2^{k+1}}\mathbb{E}\Big[\min_{\vec{S}} H_{k+1}[\vec{S}]\Big]. 
\ee
We now use a standard inequality for Gaussian random variables \cite{talagrand2011mean2}. Consider $M$ Gaussian random variables $\{ g_i \}_{1 \leq i \leq M}$ with $M >1$, $\mathbb{E}[g_i] = 0$ and $\mathbb{E}[g_i^2] = \tau^2 > 0$: We do not assume that $\{g_i\}$ are independent. We have 
\be\label{eq99}
\mathbb{E}\Big[\min_i g_i\Big] > - \tau (2 \log M)^{1/2}.
\ee
We now use Eq. (\ref{eq99}) with $\{ g_i \} = \{ H_{k+1}[\vec{S}] \}$, $M = 2^{2^{k+1}}$ and 

\bea\label{eq60}
\tau^2 & = & \mathbb{E}[H_{k+1}[\vec{S}]^2] \\ \nn
& = & \sum_{l=0}^{k+1} 2^{k+1-l} 2^{l(1-\sigma)} \\ \nn
& \leq & 2^{k+1} \frac{2^\sigma}{2^\sigma-1},
\eea 
and we obtain
\be\label{eq101}
\frac{1}{2^{k+1}}\mathbb{E}\Big[\min_{\vec{S}} H_{k+1}[\vec{S}]\Big] \geq- \bigg(\frac{2^\sigma}{2^\sigma-1} 2 \log 2\bigg)^{1/2}.  
\ee
 Equations (\ref{eq71}), (\ref{eq100}), (\ref{eq101}) show that
\be\label{eq102}
s_{k+1} \geq - \beta \bigg(\frac{2^\sigma}{2^\sigma-1} 2 \log 2\bigg)^{1/2} + f_{k+1}. 
\ee
Finally, we  take the $k \rightarrow \infty$ limit of both sides of Eq. (\ref{eq102}), we use the hypothesis that $\lim_{k \rightarrow \infty}s_{k+1}$ exists and Theorem \ref{thm5}, and we obtain Eq. (\ref{eq66}). 
\end{proof}

We now prove a lower bound for the infinite-volume free energy in Eq. (\ref{eq66}) by using a strategy recently proposed in \cite{castellana2014free} for hierarchical models of spin glasses. With this method, the energy $\epsilon_{k+1}[\vec{S}]$ is reabsorbed into two random energies $e_1[\vec{S}_1]$, $e_2[\vec{S}_2]$ for the left and right-half of the spins respectively: As we will show in the following, this method improves over the simple inequality $f \geq \log 2$

\begin{lem}\label{lem4}
For $\sigma > 0$, the infinite-volume free energy satisfies  
\bea\label{eq48}
f  \geq  \varphi(\beta),
\eea
where $\varphi(\beta)$ is given by Eq. (\ref{eq90}). 
\end{lem}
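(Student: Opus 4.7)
My plan is to derive the lower bound by iteratively applying a Guerra-type interpolation that, at each level of the hierarchy, replaces the top-level random energy with two pieces depending only on the left and right halves of the spins. This will decouple the HREM into $2^{k+1}$ independent single-spin systems, leaving behind an accumulated per-configuration variance that tends to $\tfrac{2^\sigma}{2^\sigma-1}$.

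For the one-step inequality, I would introduce fresh IID standard Gaussians $\{e_1[\vec{S}_1]\}$, $\{e_2[\vec{S}_2]\}$, independent of all previous randomness, and consider the interpolating Hamiltonian
\be
H_{k+1,t}[\vec{S}] \equiv H_k^1[\vec{S}_1] + H_k^2[\vec{S}_2] + 2^{(k+1)\frac{1-\sigma}{2}}\left[\sqrt{t}\,\epsilon_{k+1}[\vec{S}] + \sqrt{\frac{1-t}{2}}\,(e_1[\vec{S}_1]+e_2[\vec{S}_2])\right].
\ee
Both Gaussian processes in the bracket have unit diagonal covariance, but $(e_1+e_2)/\sqrt 2$ has off-diagonal covariance $\tfrac12[\mathbb{I}(\vec{S}_1^1=\vec{S}_1^2)+\mathbb{I}(\vec{S}_2^1=\vec{S}_2^2)]$, which dominates the vanishing off-diagonal covariance of $\epsilon_{k+1}$. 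Repeating the Gaussian integration-by-parts computation behind Eq.~(\ref{eq4}), the diagonal term in $d\phi_{k+1,t}/dt$ cancels and the remaining two-replica integrand is non-negative, yielding $d\phi_{k+1,t}/dt\geq 0$. Hence $f_{k+1}=\phi_{k+1,1}\geq\phi_{k+1,0}=\tfrac{1}{2^k}\mathbb{E}[\log\tilde Z_k]$, where $\tilde Z_k$ is the partition function of an HREM at level $k$ dressed with an additional IID per-configuration Gaussian energy of variance $2^{(k+1)(1-\sigma)}/2$.

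I then iterate. At the new top, the native $2^{l(1-\sigma)/2}\epsilon_l[\vec{S}]$ merges with the extra $e$-term into a single IID Gaussian process across $\vec{S}$ of variance $V_l$, so the same splitting applies at level $l$. The recursion $V_{l-1}=2^{(l-1)(1-\sigma)}+V_l/2$ with $V_{k+1}=2^{(k+1)(1-\sigma)}$ unwinds to $V_l=2^{l(1-\sigma)}\sum_{j=0}^{k+1-l}2^{-j\sigma}$. After $k+1$ iterations the system fully decouples into $2^{k+1}$ independent single-spin Hamiltonians of variance $V_0=\sum_{j=0}^{k+1}2^{-j\sigma}\nearrow\frac{2^\sigma}{2^\sigma-1}$, giving
\be
f_{k+1}\geq \mathbb{E}\Bigg[\log\sum_{S=\pm 1}\exp\Big(-\beta\sqrt{V_0}\,\epsilon_0[S]\Big)\Bigg];
\ee
passing to $k\to\infty$, by continuity of the right-hand side in $V_0$ and by Theorem~\ref{thm5} on the left, yields (\ref{eq48}). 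The main subtlety is bookkeeping: at each iteration the freshly introduced $e$-variables must be independent of all previously used randomness, and the newly merged top-level Gaussian must remain IID across configurations at that level. Both hold because splitting an IID process produces independent contributions on the two halves and independent Gaussians sum to independent Gaussians; with these structural checks in place, the one-step interpolation applies identically at every iteration and the recursion for $V_l$ unwinds routinely.
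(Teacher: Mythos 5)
Your proposal is correct and follows essentially the same route as the paper: the same Guerra-type interpolation splitting the top-level energy $\epsilon_{k+1}[\vec{S}]$ into independent left/right pieces $(e_1[\vec{S}_1]+e_2[\vec{S}_2])/\sqrt{2}$, the same covariance comparison $\tfrac12[\mathbb{I}(\vec{S}^1_1=\vec{S}^2_1)+\mathbb{I}(\vec{S}^1_2=\vec{S}^2_2)]\geq\mathbb{I}(\vec{S}^1=\vec{S}^2)$ giving $d\phi_{k+1,t}/dt\geq 0$, and the same recursion down to single spins. Your variance bookkeeping $V_l=2^{l(1-\sigma)}\sum_{j=0}^{k+1-l}2^{-j\sigma}$ is exactly the paper's multiplicative parameter $x$ in $\phi_{l,1}(x)$ written as $V_l=x^2 2^{l(1-\sigma)}$, and the limit $V_0\nearrow 2^\sigma/(2^\sigma-1)$ reproduces Eq.~(\ref{eq49}).
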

\begin{proof}
Given a number $x$, we set
\bea \label{eq40}
H_{k+1,t}[\vec{S}] & \equiv &  \sqrt{t} x \, 2^{(k+1)\frac{1-\sigma}{2}} \epsilon_{k+1}[\vec{S}] + \sqrt{1-t} \frac{x}{\sqrt{2}}2^{(k+1)\frac{1-\sigma}{2}}  (e_1[\vec{S}_1] + e_2[\vec{S}_2] )  +\\\nn
&  & + H_k^1[\vec{S}_1] + H_k^2[\vec{S}_2], \\ \label{eq41}
Z_{k+1,t} & \equiv & \sum_{\vec{S}} \exp(-\beta H_{k+1,t}[\vec{S}] ),\\ \label{eq42}
\phi_{k+1,t}(x)  & \equiv & \frac{1}{2^{k+1}} \mathbb{E}[\log Z_{k+1,t}], 
\eea
where $\{ e_1[\vec{S}_1] \}_{\vec{S}_1}$, $\{ e_2[\vec{S}_2] \}_{\vec{S}_2}$ are IID Gaussian random variables with zero mean and unit variance which are independent of all other random variables. Let us now proceed with the interpolation: From Eqs. (\ref{eq46}), (\ref{eq40}), (\ref{eq41}), (\ref{eq42}) we have 
\be
\phi_{k+1,1}(1) = f_{k+1},
\ee
and 
\bea \label{eq45}\nn
\phi_{k+1,0}(x)&  = & \frac{1}{2^k} \mathbb{E}\Bigg[\log  \sum_{\vec{S}} \exp\Bigg\{ \hspace{-0.15cm}- \beta \Bigg[H_{k-1}^1[\vec{S}_1] + H_{k-1}^2[\vec{S}_2]  +   \left(1+\frac{x^2}{2^\sigma}\right)^{\frac{1}{2}} \hspace{-0.1cm} 2^{k \frac{1-\sigma}{2}} \epsilon_k[\vec{S}] \Bigg]  \Bigg\} \Bigg]  \\
 & =  &\phi_{k,1} \Big(\Big(1+\frac{x^2}{2^\sigma}\Big)^{\frac{1}{2}} \Big).
\eea
The derivative of $\phi_{k+1,t}(x)$ with respect to $t$ reads
\bea \label{eq43}
\frac{d \phi_{k+1,t}(x)}{dt} = \frac{\beta^2 x^2}{2 \; 2^{(k+1)\sigma}} \mathbb{E}\left[ \left\langle \frac{1}{2} \left[\mathbb{I}(\vec{S}^1_1 = \vec{S}^2_1 ) + \mathbb{I}(\vec{S}^1_2 = \vec{S}^2_2 )\right] - \mathbb{I}(\vec{S}^1 = \vec{S}^2) \right\rangle_t \right] ,
\eea
where $\langle \rangle_t$ is given by Eq. (\ref{eq25}), $\vec{S}^a = \{ S^a_1, \cdots, S^a_{2^{k+1}} \}$ denotes the spin configuration of replicas $a=1,2$, and $\vec{S}^a_1 = \{ S^a_1, \cdots, S^a_{2^k} \}$, $\vec{S}^a_2 = \{ S^a_{2^k+1}, \cdots, S^a_{2^{k+1}} \}$ are the projections onto the left and right half respectively. Equation (\ref{eq43}) and the inequality 
\be
\frac{1}{2} \left[\mathbb{I}(\vec{S}^1_1 = \vec{S}^2_1 ) + \mathbb{I}(\vec{S}^1_2 = \vec{S}^2_2 )\right] \geq \mathbb{I}(\vec{S}^1 = \vec{S}^2)
\ee
imply
\be \label{eq44}
\frac{d \phi_{k+1,t}(x)}{dt}  \geq 0 . 
\ee
Equations  (\ref{eq45}), (\ref{eq44}) give 
\be\label{eq47}
\phi_{k+1,1}(x) \geq \phi_{k,1} \bigg(\bigg(1+\frac{x^2}{2^\sigma}\bigg)^{\frac{1}{2}} \bigg). 
\ee
We now use Eq. (\ref{eq47}) recursively
\bea \label{eq49}
f_{k+1}  & \geq & \phi_{k,1} \bigg(\bigg(1+\frac{1}{2^\sigma}\bigg)^{\frac{1}{2}} \bigg) \\  \nn
& \geq & \phi_{k-1,1} \bigg(\bigg(1+\frac{1}{2^\sigma}+ \frac{1}{2^{2\sigma}}\bigg)^{\frac{1}{2}} \bigg) \\\nn
& \geq & \cdots \\ \nn
& \geq &  \phi_{1,0} \bigg(\bigg(1+\sum_{l=1}^k\frac{1}{2^{l\sigma}}\bigg)^{\frac{1}{2}} \bigg) \\  \nn
& = & \mathbb{E}\Bigg[ \log \sum_{S = \pm1} \exp\Bigg[ - \beta \Bigg(1 + \sum_{l=1}^{k+1} \frac{1}{2^{l \sigma}} \Bigg)^{1/2} \epsilon_0[S]\Bigg] \Bigg].
\eea
Equation (\ref{eq48}) follows by taking the $k \rightarrow \infty$ limit of both sides of Eq. (\ref{eq49}) and by using Theorem \ref{thm5} and Eq. (\ref{eq90}). 
\end{proof}

To obtain a bound on the Kauzmann temperature, we now establish two  properties of the lower bound for the entropy obtained in Lemma \ref{lem4}  

\begin{lem}\label{lem2}
For $\sigma > 0$, 
we have
\be\label{eq86}
\frac{d}{d\beta}\bigg[ \varphi(\beta) - \beta \bigg(\frac{2^\sigma}{2^\sigma-1} 2 \log 2\bigg)^{1/2}  \bigg] \leq 0. 
\ee
 and Eq. (\ref{eq57})  has a unique solution, that we will denote by  $\beta_{\mathlarger{\ast}}$. 
\end{lem}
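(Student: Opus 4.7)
The plan is to deduce both assertions of the lemma from convexity of $\varphi$ in $\beta$ together with an explicit computation of the asymptotic slope $\lim_{\beta\to\infty}\varphi'(\beta)$. Writing $c \equiv (2^\sigma/(2^\sigma-1))^{1/2}$ and $\xi_\pm \equiv \epsilon_0[\pm 1]$ for two IID standard Gaussians, the integrand $\log(e^{-\beta c \xi_+} + e^{-\beta c \xi_-})$ appearing in (\ref{eq90}) is a log-sum-exp of linear functions of $\beta$, hence convex in $\beta$; since expectation preserves convexity, $\varphi$ is convex in $\beta$, so $\varphi'$ is non-decreasing on $[0,\infty)$ and $\sup_{\beta \geq 0}\varphi'(\beta) = \lim_{\beta\to\infty}\varphi'(\beta)$.

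To evaluate this limit, I would use the elementary identity $\log(e^a + e^b) = \max(a,b) + \log(1 + e^{-|a-b|})$ to write
\begin{equation*}
\varphi(\beta) = -\beta c\,\mathbb{E}[\min(\xi_+,\xi_-)] + \mathbb{E}[\log(1 + e^{-\beta c |\xi_+ - \xi_-|})].
\end{equation*}
The second term is uniformly bounded by $\log 2$, so upon dividing by $\beta$ and sending $\beta \to \infty$ it vanishes. Using symmetry of the standard Gaussian together with the classical identity $\mathbb{E}[\max(\xi_+,\xi_-)] = 1/\sqrt{\pi}$, one obtains $\varphi(\beta)/\beta \to c/\sqrt{\pi}$. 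For a differentiable convex function with linear growth, the asymptotic slope equals the limit of the derivative, so $\lim_{\beta\to\infty}\varphi'(\beta) = c/\sqrt{\pi}$.

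The inequality (\ref{eq86}) now reduces to the elementary estimate $1/\pi < 2\log 2$ (numerically $0.318 < 1.386$): together with monotonicity of $\varphi'$ it gives $\varphi'(\beta) \leq c/\sqrt{\pi} < c\sqrt{2\log 2}$ for every $\beta \geq 0$, in fact with strict inequality. For the uniqueness assertion, set $g(\beta) \equiv \varphi(\beta) - \beta c\sqrt{2\log 2}$: by the previous step $g$ is continuous and strictly decreasing on $[0,\infty)$ with $g(0) = \log 2 > 0$ and $g(\beta)/\beta \to c/\sqrt{\pi} - c\sqrt{2\log 2} < 0$, so $g(\beta) \to -\infty$; the intermediate value theorem then yields a unique $\beta_\ast > 0$ solving (\ref{eq57}).

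The only delicate point is the asymptotic-slope computation, which relies on the log-sum-exp identity plus dominated convergence to kill the subleading $O(1/\beta)$ contribution; once $\lim_{\beta\to\infty}\varphi'(\beta) = c/\sqrt{\pi}$ is established, the remaining arguments are routine consequences of convexity of $\varphi$, continuity of $g$, and the numerical inequality $1/\pi < 2\log 2$.
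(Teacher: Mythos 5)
Your proof is correct, and it reaches (\ref{eq86}) by a genuinely different route than the paper, while the uniqueness half is essentially the same. Writing $c\equiv(2^\sigma/(2^\sigma-1))^{1/2}$: for the monotonicity statement the paper simply differentiates under the expectation, bounds the Boltzmann average from below by the minimum, $\langle\epsilon_0[S]\rangle_0\geq\min_S\epsilon_0[S]$, and invokes the generic Gaussian-minimum inequality (\ref{eq99}) with $M=2$, obtaining $\varphi'(\beta)\leq -c\,\mathbb{E}[\min_S\epsilon_0[S]]< c\,(2\log 2)^{1/2}$ pointwise in $\beta$ (see Eq. (\ref{eq58})); you instead establish convexity of $\varphi$, identify $\sup_\beta\varphi'(\beta)$ with the asymptotic slope, and compute that slope exactly as $c/\sqrt{\pi}$ via $\mathbb{E}[\max(\xi_+,\xi_-)]=1/\sqrt{\pi}$, finishing with the numerical check $1/\pi<2\log 2$. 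For the uniqueness part the two arguments coincide in substance: both use $\varphi(0)=\log 2>0$ and the fact that $\lim_{\beta\to\infty}\varphi(\beta)/\beta=-c\,\mathbb{E}[\min_S\epsilon_0[S]]$ falls strictly below $c\,(2\log 2)^{1/2}$ (the paper gets the limit by sandwiching $\zeta$ as in (\ref{eq104}) and then applies (\ref{eq99}) again; you get it from the log-sum-exp decomposition and the exact value $1/\sqrt{\pi}$), and both conclude by combining the sign change with monotonicity. What your route buys is the sharp constant: it shows $\varphi'(\beta)\leq c/\sqrt{\pi}$, strictly better than $c\,(2\log 2)^{1/2}$, so (\ref{eq86}) holds with strict inequality and uniform slack. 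The cost is extra machinery (convexity of $\varphi$, the asymptotic-slope lemma for convex functions with linear growth, and the exact expectation of the maximum of two standard Gaussians), whereas the paper's pointwise bound is a two-line computation that reuses an inequality already needed elsewhere in the argument.
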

\begin{proof}
Equation (\ref{eq86}) follows from 
\bea \label{eq58}
\frac{d \varphi(\beta)}{d \beta} &  = &  - \bigg(\frac{2^\sigma}{2^\sigma-1} \bigg)^{1/2} \mathbb{E}[\langle \epsilon_0 [S] \rangle_0 ] \\ \nn
& \leq &  - \bigg(\frac{2^\sigma}{2^\sigma-1} \bigg)^{1/2} \mathbb{E}\Big[\min_{S} \epsilon_0 [S]  \Big]\\\nn
& \leq &   \bigg(\frac{2^\sigma}{2^\sigma-1} 2 \log 2\bigg)^{1/2},
\eea
where $\langle \rangle_0$ denotes the average associated with the \textit{Boltzmannfaktor}  
\be\label{eq50}
\zeta \equiv \sum_{S = \pm1} \exp\left[ - \beta  \bigg(\frac{2^\sigma}{2^\sigma-1} \bigg)^{1/2} \epsilon_0[S]\right],
\ee
and in the last line of  Eq. (\ref{eq58}) we used Eq. (\ref{eq99}).  We now prove that there is a unique solution to Eq. (\ref{eq57}): First, from Eq. (\ref{eq90}) we have 
\be\label{eq55}
\varphi(0) = \log 2.
\ee
Second, we consider $\beta > 0$ and we have
\be\label{eq104}
\exp\left[ - \beta  \bigg(\frac{2^\sigma}{2^\sigma-1} \bigg)^{1/2} \min_{S} \epsilon_0[S]\right] \leq \zeta \leq 2 \exp\left[ - \beta  \bigg(\frac{2^\sigma}{2^\sigma-1} \bigg)^{1/2} \min_{S} \epsilon_0[S]\right]. 
\ee
We now take the logarithm of both sides of Eq. (\ref{eq104}), we divide by $\beta$ and we take the expectation: By using Eqs. (\ref{eq90}), (\ref{eq50}) we obtain 
\be\label{eq91}
- \bigg(\frac{2^\sigma}{2^\sigma-1} \bigg)^{1/2} \mathbb{E}\Big[\min_{S} \epsilon_0[S]\Big] \leq \frac{\varphi(\beta)}{\beta}  \leq - \bigg(\frac{2^\sigma}{2^\sigma-1} \bigg)^{1/2}\mathbb{E}\Big[ \min_{S} \epsilon_0[S]\Big]  + \frac{\log 2}{\beta}. 
\ee
Equation (\ref{eq91}) shows that $\lim_{\beta \rightarrow \infty} (\varphi(\beta)/\beta)$ exists, and that it is given by 
\be\label{eq52}
\lim_{\beta \rightarrow \infty} \frac{\varphi(\beta)}{\beta} = - \bigg(\frac{2^\sigma}{2^\sigma-1} \bigg)^{1/2} \mathbb{E}\Big[\min_{S} \epsilon_0[S] \Big]. 
\ee
We now estimate the right-hand side (RHS) of Eq. (\ref{eq52}) with Eq. (\ref{eq99}): We obtain
\be\label{eq56}
\lim_{\beta \rightarrow \infty} \frac{1}{\beta}\left[\varphi(\beta)  - \beta \bigg(\frac{2^\sigma}{2^\sigma-1} 2 \log 2\bigg)^{1/2} \right] < 0.
\ee 
Equations (\ref{eq55}), (\ref{eq56}) show that there exists at least one solution to Eq. (\ref{eq57}), and Eq. (\ref{eq86}) proves that the solution is unique. 
\end{proof}

We can now prove Theorems \ref{thm3}, \ref{thm4}

\begin{proof}[Proof of Theorem \ref{thm3}]
From Eq. (\ref{eq90}) we have
\bea\label{eq82}
\varphi(\beta)  & = & \log 2 +   \mathbb{E}\Bigg[ \log \Bigg\{ \frac{1}{2}\sum_{S = \pm1} \exp\Bigg[ - \beta \Bigg(\frac{2^\sigma}{2^\sigma-1} \Bigg)^{1/2} \epsilon_0[S]\Bigg] \Bigg\} \Bigg] \\ \nn
& \geq & \log 2 +   \mathbb{E}\Bigg[  \frac{1}{2}\sum_{S = \pm1} \log \exp\Bigg[ - \beta \Bigg(\frac{2^\sigma}{2^\sigma-1} \Bigg)^{1/2} \epsilon_0[S]\Bigg]  \Bigg] \\ \nn
& = & \log 2,
\eea
where in the second line of Eq. (\ref{eq82}) we used the concavity of the logarithm, and in the third line we used $\mathbb{E}[\epsilon_0[S]]=0$. Equation (\ref{eq80}) is then obtained from Lemmas \ref{lem3}, \ref{lem4} and Eq. (\ref{eq82}). Finally, Eq. (\ref{eq81}) follows from Eq. (\ref{eq80}).
\end{proof}

\begin{proof}[Proof of Theorem \ref{thm4}]
Equation (\ref{eq69}) follows from Lemmas \ref{lem3}, \ref{lem4}.  Equation (\ref{eq74}) follows from  Lemma \ref{lem2}.
\end{proof}

It is easy to show that bound (\ref{eq74}) for the Kauzmann temperature improves over the mean-field bound (\ref{eq81}). Indeed, for a given $\sigma > 0$ and $\beta>0$, a strict inequality holds \cite{hardy1952inequalities} in the second line of Eq. (\ref{eq82}) 
\be\label{eq103}
\varphi(\beta)> \log 2,
\ee
 and the RHS of inequality (\ref{eq74}) is strictly larger than the RHS of inequality (\ref{eq81})
\be
\beta_{\mathlarger{\ast}} > \bigg(\frac{2^\sigma-1 }{2^\sigma}  \frac{\log 2}{2} \bigg)^{1/2}.
\ee
For $\sigma \rightarrow 0$, both the RHS of Eqs. (\ref{eq80}) and (\ref{eq74}) tend to zero: This is in agreement with the physical expectation that the critical temperature should diverge in this limit because the Hamiltonian is superextensive. 

\section{Hierarchical $p$-spin Model}\label{sec2}

We now introduce the HPS: Given an integer $p\geq 3$, the HPS is a system of $p^{k+1}$ Ising spins $S_i = \pm 1$ labeled by index $i=1,2,\cdots, p^{k+1}$, whose Hamiltonian is given by the following
\begin{defn}\label{def2}
The Hamiltonian of the hierarchical $p$-spin model (HPS) is defined recursively by the equation
\be \label{eq20}
H_{k+1}[\vec{S}] = \sum_{r=1}^p H_k^r[\vec{S}_r] - \frac{\sqrt{p!}}{p^{(k+1)(p-2(1-\sigma))/2}} \sum_{i_1 > \cdots > i_p=1}^{p^{k+1}} J_{i_1 \cdots i_p} S_{i_1} \cdots S_{i_p},
\ee 
where $\vec{S} \equiv \{S_i\}_{1 \leq i \leq p^{k+1}}$, and $\vec{S}_r \equiv \{S_i\}_{1 +p^k(r-1) \leq i \leq p^k r}$  are the spins in the $r$-th hierarchical block, $\{ H_k^r[\vec{S}_r] \}_{1\leq r \leq p}$ are independent, $H_0[S_i] \equiv h_i S_i$, $\{ J_{i_1 \cdots i_p} \}_{p^{k+1} \geq  i_1 > \cdots > i_p \geq 1}$ are IID Gaussian random variables with zero mean and unit variance,  $\{ h_i \}_{1 \leq i \leq p^{k+1}}$ are IID Gaussian random variables, and $\sigma$ is a real number.  
\end{defn}

The structure of spin interactions in the HPS is depicted in Fig. \ref{fig1}. Like for the HREM,  the number $\sigma$  in Definition \ref{def2} determines how fast spin-spin interactions decrease with distance: The larger $\sigma$, the faster the interaction decrease. In addition, for $\sigma > 1/2$ the HPS is a non-mean-field system: Indeed, setting 
\be
\eta_{k+1}[\vec{S}] \equiv - \frac{\sqrt{p!}}{p^{(k+1)(p-2(1-\sigma))/2}} \sum_{i_1 > \cdots > i_p=1}^{p^{k+1}} J_{i_1 \cdots i_p} S_{i_1} \cdots S_{i_p},
\ee
for large $k$ the variance of the interaction energy between spin blocks $\vec{S}_1, \cdots, \vec{S}_r$ is 
\be
\mathbb{E}\big[\eta_{k+1}[\vec{S}]^2 \big]  = p^{(k+1)2(1-\sigma)},
\ee
which is subextensive in the system volume $p^{k+1}$.\\

In the large-$p$ limit, the mean-field PSM is known to converge to the REM \cite{derrida1980randomlimit}: In this regard, it is easy to show that this property does not hold for the HPS and the HREM. Indeed, for the HREM the covariance of the interaction energy is 
\be\label{eq204}
\mathbb{E}[\epsilon_{k+1}[\vec{S}]\epsilon_{k+1}[\vec{S}']] = 2^{(k+1)(1-\sigma)} \mathbb{I}(\vec{S} = \vec{S}'), 
\ee
while for the HPS
\bea\label{eq203}
\mathbb{E}[\eta_{k+1}[\vec{S}]\eta_{k+1}[\vec{S}']] &=& p^{(k+1)2(1-\sigma)}\frac{1}{p^{(k+1)p}} \sum_{i_1 \neq \cdots \neq i_p=1}^{p^{k+1}} S_{i_1}S'_{i_1} \cdots S_{i_p}S'_{i_p} \\ \nn
& \overset{k \gg 1}{=} & p^{(k+1)2(1-\sigma)} Q^p \\ \nn
& \overset{p \gg 1}{=} &  p^{(k+1)2(1-\sigma)} \mathbb{I}(\vec{S} = \vec{S}'), 
\eea
where in the second line of Eq. (\ref{eq203}) we neglected the diagonal terms $i_1 = i_2 \neq \cdots \neq i_p, i_1 = i_2 = i_3 \neq \cdots \neq i_p, \cdots$ in the sum, which are irrelevant for $k \rightarrow \infty$, and $Q \equiv (1/p^{k+1}) \sum_{i=1}^{p^{k+1}} S_i S'_i$ is the overlap between $\vec{S}$ and $\vec{S}'$. Equation (\ref{eq203}) shows that the covariance of the $k+1$-th block interaction energy of the HPS converges for large $p$ to that of the HREM, Eq. (\ref{eq204}), if $k$ is large.  Still, for small $k$ the diagonal terms in the first line of Eq. (\ref{eq203}) cannot be neglected, and the covariance for the HPS differs from that of the HREM. \\

\begin{figure}
  \centering\includegraphics[scale=0.6]{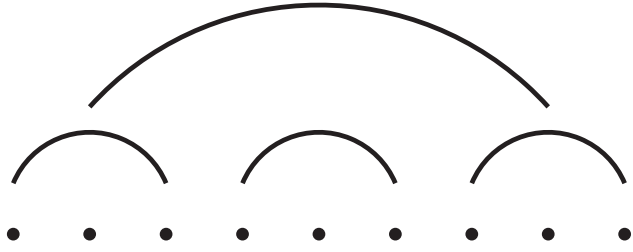}
  \caption{Hierarchical $p$-spin model with $p=3$, $k=1$ and vanishing external magnetic field $h_i$. Each dot represents a spin. The lower arcs represent three-spin interactions between spins below them.  The upper arcs represent three-spin interactions between the spins interacting through the lower arcs. \label{fig1}}
\end{figure}

To prove the existence of the thermodynamic limit, let us introduce the partition function 
\be\label{eq30}
Z_{k+1} \equiv  \sum_{\vec{S}} \exp(-\beta H_{k+1}[\vec{S}]),
\ee
the free energy  
\be\label{eq12}
f_{k+1} =  \frac{1}{p^{k+1}} \mathbb{E} \left[\log  Z_{k+1} \right],
\ee
and prove the following
\begin{thm}\label{thm1}
If $\sigma > 1/2$, the infinite-volume free energy
\be
f \equiv \lim_{k \rightarrow \infty} f_{k+1}
\ee
exists.  
\end{thm}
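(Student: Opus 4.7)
The plan is to follow the scheme of Theorem~\ref{thm5}, adapting Guerra's interpolation to the $p$-spin covariance structure. I would show that the sequence $k\mapsto f_{k+1}$ is monotone non-decreasing and bounded above, hence convergent. First, introduce the interpolating Hamiltonian $H_{k+1,t}[\vec{S}] \equiv \sum_{r=1}^p H_k^r[\vec{S}_r] + \sqrt{t}\,\eta_{k+1}[\vec{S}]$, with partition function $Z_{k+1,t}$ and free energy $\phi_{k+1,t}$ defined analogously to Eqs.~(\ref{eq2})--(\ref{eq3}). At the endpoints, $\phi_{k+1,1} = f_{k+1}$ and $\phi_{k+1,0} = f_k$, the latter because the $p$ sub-Hamiltonians $\{H_k^r[\vec{S}_r]\}_{r=1}^p$ are independent, so $Z_{k+1,0}$ factorizes into $p$ i.i.d.\ copies. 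Gaussian integration by parts with respect to the couplings $\{J_{i_1\cdots i_p}\}$ then yields
\be
\frac{d\phi_{k+1,t}}{dt} = \frac{\beta^2}{2\,p^{k+1}}\,\mathbb{E}\bigl[\bigl\langle K(\vec{S}^1,\vec{S}^1) - K(\vec{S}^1,\vec{S}^2)\bigr\rangle_t\bigr],
\ee
where $K(\vec{S},\vec{S}') \equiv \mathbb{E}[\eta_{k+1}[\vec{S}]\,\eta_{k+1}[\vec{S}']]$ and $\langle\cdot\rangle_t$ is the two-replica average in Eq.~(\ref{eq25}). Since $K(\vec{S},\vec{S})$ does not depend on $\vec{S}$ and Cauchy--Schwarz gives $|K(\vec{S},\vec{S}')|\leq K(\vec{S},\vec{S})$, we conclude $d\phi_{k+1,t}/dt \geq 0$, so $f_{k+1}\geq f_k$.

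For the upper bound, I would integrate out the top-level couplings first. Denoting by $\mathbb{E}_J[\cdot]$ the expectation over $\{J_{i_1\cdots i_p}\}_{p^{k+1}\geq i_1 > \cdots > i_p\geq 1}$, Jensen's inequality yields
\be
f_{k+1} \leq \frac{1}{p^{k+1}}\,\mathbb{E}\bigl[\log \mathbb{E}_J[Z_{k+1}]\bigr] = f_k + \frac{\beta^2}{2}\,p^{(k+1)(1-2\sigma)},
\ee
where the equality uses $\mathbb{E}[\eta_{k+1}[\vec{S}]^2] = p^{(k+1)2(1-\sigma)}$ together with the independence of the $J$'s from the $H_k^r$'s, and the fact that $\log \prod_r \sum_{\vec{S}_r} \exp(-\beta H_k^r[\vec{S}_r])$ splits into $p$ i.i.d.\ terms. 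Iterating this bound down to the base case and invoking the hypothesis $\sigma > 1/2$, so that $p^{1-2\sigma} < 1$ and the geometric series $\sum_{l\geq 1} p^{l(1-2\sigma)}$ converges, gives $f_{k+1} \leq f_0 + (\beta^2/2)\,p^{1-2\sigma}/(1-p^{1-2\sigma}) < \infty$, with $f_0 = \mathbb{E}[\log\sum_{S=\pm 1}\exp(-\beta h_1 S)]$ finite. Combining monotonicity with this uniform upper bound yields the existence of $\lim_{k\to\infty} f_{k+1}$.

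The only step requiring care specific to $p$-spin interactions is the sign of $K(\vec{S},\vec{S}) - K(\vec{S},\vec{S}')$: unlike the pair-overlap case in Guerra's SK argument, the constrained sum $\sum_{i_1>\cdots>i_p}$ is not manifestly a positive quadratic form. The cleanest way to handle this is to observe that $K(\vec{S},\vec{S}')/K(\vec{S},\vec{S})$ equals the expectation of $\prod_{j=1}^p S_{i_j}S'_{i_j}$ under uniform sampling of a $p$-subset $\{i_1,\ldots,i_p\}$ of $\{1,\ldots,p^{k+1}\}$, which lies in $[-1,1]$; equivalently, Cauchy--Schwarz applied to the Gaussian process $\eta_{k+1}$ suffices. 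Beyond this, the argument is a direct transcription of the HREM proof with $2\to p$ and interaction variance $2^{(k+1)(1-\sigma)} \to p^{(k+1)2(1-\sigma)}$, and I anticipate no further obstacles.
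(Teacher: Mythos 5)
Your proposal is correct and follows essentially the same route as the paper: Guerra-style interpolation in the top-level couplings to get $f_{k+1}\geq f_k$, then Jensen's inequality plus Gaussian integration over the top-level $J$'s to get the summable upper bound $f_{k+1}\leq f_k+(\beta^2/2)\,p^{(k+1)(1-2\sigma)}$, whose iteration converges for $\sigma>1/2$. The only difference is that you spell out the monotonicity step (the covariance identity and the Cauchy--Schwarz bound $|K(\vec{S}^1,\vec{S}^2)|\leq K(\vec{S}^1,\vec{S}^1)$) which the paper leaves implicit by referring back to Theorem~\ref{thm5}; your treatment of that point is the right one.
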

\begin{proof}
Given $0 \leq t \leq 1$, we introduce the interpolating Hamiltonian and the associated partition function and free energy 

\bea\label{eq15}
H_{k+1,t}[\vec{S}] & \equiv & \sum_{r=1}^p H_k^r[\vec{S}_r] - \sqrt{t} \frac{\sqrt{p!}}{p^{(k+1)(p-2(1-\sigma))/2}} \sum_{i_1 > \cdots > i_p=1}^{p^{k+1}} J_{i_1 \cdots i_p} S_{i_1} \cdots S_{i_p} .
\eea
Following the method used in Theorem \ref{thm5}, it is easy to show that Eq. (\ref{eq15}) implies
\be\label{eq19}
f_{k+1} \geq f_k. 
\ee 
Equation (\ref{eq19}) shows that  the sequence $k \rightarrow f_{k+1}$ is non-decreasing. To show that $\lim_{k\rightarrow \infty} f_{k+1}$ exists, we need to prove that the sequence is also bounded above: This can be done along the same lines as in Theorem \ref{thm5}. Let us denote by $\mathbb{E}_J[]$ the average over the random variables  at the $k+1$-th hierarchical level $\{ J_{i_1 \cdots i_p} \}_{p^{k+1} \geq i_1 > \cdots > i_p \geq 1}$ in Definition \ref{def2}. From Eq. (\ref{eq12}) we have
\bea \label{eq23}
f_{k+1} & = &   \frac{1}{p^{k+1}} \mathbb{E} [ \mathbb{E}_J[\log  Z_{k+1} ]] \\ \nn
& \leq & \frac{1}{p^{k+1}} \mathbb{E} [\log  \mathbb{E}_J[ Z_{k+1} ]] \\  \nn
& = & f_k + \frac{\beta ^2  p^{k+1}(p^{k+1}-1) \cdots (p^{k+1}-(p-1))}{2 \; p^{(k+1)( 2 \sigma-1+p)}}  \\ \nn
& \leq & f_k +  \frac{\beta ^2 }{2 } p^{(k+1)(1-2 \sigma)} , 
\eea
where in the second line of Eq. (\ref{eq23}) we used Jensen's inequality, and in the third line we computed the Gaussian integral over $\{ J_{i_1 \cdots i_p} \}_{p^{k+1} \geq i_1 > \cdots > i_p \geq 1}$. We now iterate Eq. (\ref{eq23}) for $k+1, k, k-1, \cdots, 0$, and we obtain 
\bea  \label{eq24}
f_{k+1}  & \leq & \mathbb{E}[\log( 2\cosh(\beta h))] + \frac{\beta^2 }{2}\frac{1}{p^{2\sigma-1}-1} \\ \nn
& < & \infty,
\eea
where in Eq. (\ref{eq24}) we used the condition $\sigma > 1/2$. Equations (\ref{eq19}), (\ref{eq24}) show that the sequence $k \rightarrow f_{k+1}$ is non-decreasing and bounded above, thus $\lim_{k\rightarrow \infty} f_{k+1} $ exists.
\end{proof}

It is straightforward to prove that the free energy of the HPS is self-averaging
\begin{thm}
If $\sigma>1/2$, then
\be\label{eq105}
\lim_{k\rightarrow \infty} \frac{1}{p^{k+1}} \log Z_{k+1} = f \; \;  \textrm{ with probability } 1. 
\ee
\end{thm}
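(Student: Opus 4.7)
The plan is to mirror the proof of Theorem~\ref{thm6}: derive a concentration estimate of the form
\[
\mathbb{P}\!\left(\left|\frac{1}{p^{k+1}}\log Z_{k+1} - f_{k+1}\right| \geq \frac{1}{p^{(k+1)/4}}\right) \leq 2\exp\!\left[-C\, p^{(k+1)/2}\right],
\]
for some constant $C = C(\beta,\sigma,p) > 0$, and then conclude via the Borel-Cantelli lemma together with Theorem~\ref{thm1}, which guarantees $f_{k+1}\to f$.

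To produce the estimate one can either invoke the Albeverio-Toninelli interpolation argument used for Theorem~\ref{thm6} verbatim---now interpolating between two independent disorder realizations of the HPS---or equivalently appeal to the Gaussian concentration inequality for Lipschitz functions on Gaussian space. I would follow the second route since the bookkeeping is more transparent. Viewing $(1/p^{k+1})\log Z_{k+1}$ as a function of the standard Gaussians $\{J^{(l)}_{i_1\cdots i_p}\}$ at every hierarchical level $l=1,\ldots,k+1$ together with the external fields $\{h_i\}_{1\leq i \leq p^{k+1}}$, the partial derivative with respect to each coupling is bounded in absolute value by its coefficient in the Hamiltonian times $\beta/p^{k+1}$, which gives at most $\beta \sqrt{p!}/(p^{k+1}\, p^{l(p-2(1-\sigma))/2})$ for a level-$l$ coupling, and at most $\beta/p^{k+1}$ for an external-field variable.

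Summing the squared partial derivatives over all $\binom{p^l}{p}\, p^{k+1-l}$ couplings at level $l$ and all $p^{k+1}$ field variables, and using $\binom{p^l}{p} \leq p^{lp}/p!$, one obtains a squared Lipschitz constant
\[
L^2 \;\leq\; \frac{\beta^2}{p^{k+1}}\!\left(1 + \sum_{l=1}^{k+1} p^{-l(2\sigma-1)}\right) \;\leq\; \frac{\beta^2}{p^{k+1}}\!\left(1 + \frac{1}{p^{2\sigma-1}-1}\right),
\]
where the geometric series converges precisely because $\sigma > 1/2$---the same structural reason that bounds $f_{k+1}$ in the proof of Theorem~\ref{thm1}. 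Gaussian concentration then yields the required tail with $\varepsilon = p^{-(k+1)/4}$, and summability of these probabilities in $k$ produces the almost-sure limit via Borel-Cantelli, just as in Theorem~\ref{thm6}.

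The only substantive step is the Lipschitz-constant computation and its geometric-series bookkeeping; verifying that $\sigma > 1/2$ is precisely what makes $L^2$ decay like $1/p^{k+1}$ uniformly in $k$ is the crux. The concentration inequality itself, the Borel-Cantelli step, and the pointwise convergence $f_{k+1}\to f$ can all be invoked as black boxes.
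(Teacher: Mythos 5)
Your proposal is correct and follows the paper's overall strategy---an exponential concentration estimate for $(1/p^{k+1})\log Z_{k+1}$ around its mean $f_{k+1}$, followed by Borel--Cantelli and the pointwise convergence $f_{k+1}\to f$ from Theorem~\ref{thm1}. The paper's own proof is simply the remark that one repeats the proof of Theorem~\ref{thm6} step by step, i.e.\ the disorder-interpolation argument of Albeverio--Toninelli applied to two independent realizations of the couplings; you acknowledge that route but instead derive the concentration inequality from Gaussian concentration for Lipschitz functions (Borell--TIS). The two tools are essentially interchangeable here and yield the same form of tail bound, so the difference is one of technique rather than substance; what your version buys is that the quantitative dependence on the model parameters is made fully explicit through the Lipschitz-constant computation, which the paper leaves implicit. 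That computation is correct: at hierarchical level $l$ there are $p^{k+1-l}\binom{p^l}{p}$ couplings each entering with coefficient $\sqrt{p!}\,p^{-l(p-2(1-\sigma))/2}$, and the bound $\binom{p^l}{p}\le p^{lp}/p!$ gives a level-$l$ contribution $\beta^2 p^{-(k+1)}p^{-l(2\sigma-1)}$ to the squared gradient norm, so the geometric series converges exactly when $\sigma>1/2$ and $L^2 = O(\beta^2 p^{-(k+1)})$ uniformly in $k$; with $\varepsilon = p^{-(k+1)/4}$ the tail $2\exp[-C p^{(k+1)/2}]$ is summable. You correctly identify this bookkeeping, and the role of $\sigma>1/2$, as the only substantive step. (One trivial loose end on both sides: the paper does not specify the variance of the external fields $h_i$; any fixed variance only changes the constant $C$.)
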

\begin{proof}
Equation (\ref{eq105}) can be obtained with a step-by-step repetition of the proof of Theorem \ref{thm6}. 
\end{proof}

\section{Conclusions and Outlook}\label{sec3}

In this paper we studied  two non-mean-field models for structural glasses built on a hierarchical lattice.  We first considered a hierarchical version of the random energy model (HREM): The HREM was previously introduced and studied in \cite{castellana2010hierarchical} by means of perturbative and numerical methods, suggesting that the model has a finite-temperature Kauzmann transition, namely a freezing transition characterized by a vanishing entropy at low temperatures \cite{biroli2009random}. For the HREM, we proved the existence of the thermodynamic limit and self-averaging of the free energy. Then, we focused on the possibility that the HREM undergoes a Kauzmann transition. We showed that the infinite-volume entropy is positive in a high-temperature region bounded below by a threshold temperature proportional to the mean-field critical temperature: This implies that if there is a freezing transition then its critical temperature, the Kauzmann temperature \cite{kauzmann1948nature}, must satisfy an upper bound. In addition, by using a method recently proposed in \cite{castellana2014free}, we improved over the above bound by exploiting the hierarchical structure of the model.  Finally, we introduced a $p$-spin model (PSM) built on a hierarchical lattice, the hierarchical $p$-spin model (HPS), a candidate model to study whether the existence of a Kauzmann transition in the mean-field PSM \cite{castellani2005spin,biroli2009random,crisanti1995thouless} holds beyond the mean-field scenario \cite{castellana2010hierarchical}. For the HPS, we proved the existence of the thermodynamic limit and self-averaging of the free energy.\\

As a topic of future research, it would be interesting to study the existence of a finite-temperature Kauzmann transition in the HREM. Indeed, studying the existence of a Kauzmann transition in non-mean-field models of structural glasses is an interesting physical question that has been raising interest  for several decades now \cite{biroli2009random}. Together with the upper bound on the Kauzmann temperature proven in this paper, a lower bound for the transition temperature would provide a rigorous proof of the existence of a Kauzmann transition in a non-mean-field model of a structural glass.

\section*{Acknowledgments}
We would like to thank A. Barra and F. Guerra for useful discussions.  Research supported by NSF Grants PHY--0957573, by the Human Frontiers Science Program, by the Swartz Foundation, and by the W. M. Keck Foundation.

\bibliographystyle{unsrtnat}
\bibliography{bibliography}

\end{document}